\documentclass{article}
\usepackage[round]{natbib}
\usepackage{amssymb, mathrsfs, amsthm, amsmath}
\newtheorem{definition}{Definition}
\newtheorem{proposition}{Proposition}
\newtheorem{lemma}{Lemma}
\newtheorem{theorem}{Theorem}

\title{Core equivalence with large agents}
\author{Aubrey Clark \footnote{I thank Rabee Tourky, Andrew McLennan, Eric Maskin, Oliver Hart, and Mihai Manea for comments. Email: aubs.bc@gmail.com}}

\begin{document}

\maketitle

\begin{abstract}
This paper studies the relationship between core and competitive equilibira in economies that consist of a continuum of agents and some large agents. We construct a class of these economies in which the core and competitive allocations do not coincide.
\end{abstract}

\section{Introduction}

Competitive equilibrium assumes price taking behavior. An old line of research dating back to \cite{Edgeworth81} has sought to provide a foundation for this assumption by constructing models in which the ability to interact strategically necessarily leads to a competitive outcome (e.g. \cite{DebreuScarf63}, \cite{Aumann64}, \cite{Gale86}).

\cite{DebreuScarf63} showed that every core allocation is eventually competitive as the size of the economy increases through replication, and \cite{Aumann64} showed that if one models the set of agents as an atomless measure space, then the core and competitive allocations coincide. Subsequent literature (e.g. \cite{Shitovitz73}, \cite{GabszewiczMertens1971}) has studied settings in which core allocations are necessarily competitive even though the underlying set of agents contains some atoms (i.e., large agents, or groups of agents that act collusively).

\cite{Shitovitz73} showed that core and competitive allocations coincide when there are at least two atoms and all atoms have the same convex preferences and initial endowment. \cite{GabszewiczMertens1971} showed that core and competitive allocations coincide when for each atom there is a sufficiently large group of consumers with the same preferences and initial endowment.

We add to these literatures by constructing economies in which a large agent exploits the rest of the economy in such a way that the resulting core allocation cannot be decentralized with prices.

\section{Model}

\paragraph{The economy} Our model of an economy consists of a set of consumers each of whom possess a commodity bundle and hold a preference relation over the set of all commodity bundles. We call the space $ \mathbb{R} ^ l $ the \emph{commodity space}, where $ l $ refers to the number of commodities. A commodity bundle is an element of the commodity space. A preference relation, denoted $ \succeq $, is a linear ordering \footnote{A complete, transitive, and reflexive binary relation} over the commodity space and induces a strict preference relation (denoted $ \succ $) in the following way: If $ x $ and $ y $ are arbitrary commodity bundles, then $ x \succ y $ if and only if $ x \succeq y $ and not $ y \succeq x $. The set of all such preference relations is $ \mathscr{P} $. The relation $ x \succeq y $ is read “$ x $ is preferred or indifferent to $ y $.”

The set of consumers in our economy is $ T $. Let $ (T, \mathscr{T} , \mu) $ be a measure space. An \emph{allocation} is an integrable function from $ T $ into the non-negative orthont of the commodity space. There is a fixed allocation, denoted $ \omega $, that is called the \emph{initial allocation}; $ \omega(t) $ is called the \emph{initial endowmenti} of consumer $ t $ . An allocation $ x $ is said to be \emph{ feasible } if $ \int x = \int \omega $.

Each consumer, represented by an element of $ T $, has a preference relation defined by a \emph{preference function} $ \succeq: T \rightarrow \mathscr{P} $. We say that $ \succeq: T \rightarrow \mathscr{P} $ is a \emph{measurable preference function} if, given arbitrary allocations $ x $ and $ y $, the set $ \{t \in T : x(t) \succ_t y(t)\} $ is measurable.

\begin{definition}
An economy, denoted $ \mathscr{E} $, is a triad $ ((T, \mathscr{T}, \mu), \omega, \succeq) $. It consists of a positive and finite measure space $ (T, \mathscr{T}, \mu) $, an initial allocation $ \omega: T \rightarrow \mathbb{R}^l_+$, and a measurable preference function $ \succeq : T \rightarrow \mathscr{P}$.
\end{definition}

Let $(T, \mathscr{T}, \mu) $ be a measure space. An \emph{atom} is a non-null element of the $\sigma$-algebra with the property that any subset also belonging to the $\sigma$-algebra is of equal or zero measure. That is, $ A \in \mathscr{T}$ is an atom if for all $B \subseteq A$, where $B \in \mathscr{T}$, we have $\mu(B) = \mu(A)$, or otherwise $\mu(B) = 0$. A measure space with atoms is called \emph{atomic}. A measure space without atoms is called \emph{atomless}.

\paragraph{Core and competitive equilibrium} Fix an economy $\mathscr{E}$. A \emph{price system}, denoted $p$, is a vector in $\mathbb{R}^l_+$. 

\begin{definition}
A competitive equilibrium $ (p, x) $ is a price system $ p $ and an allocation $ x $ such that $x$ is feasible, and for $\mu$-almost all $t$ in $T$
\begin{enumerate}
\item $p \cdot x(t) \leq p \cdot \omega(t)$, and
\item $y \succ_t x(t)$ implies $p \cdot y > p \cdot \omega(t)$.
\end{enumerate}
\end{definition}

If $ (p, x)$ is a competitive allocation, then $p$ is called an \emph{equilibrium price system} and $x$ is called a \emph{competitive allocation} The set of competitive allocations is denoted $W(\mathscr{E})$.

The other equilibrium concept we consider is the core. A \emph{coalition} is any $S$ in $\mathscr{T}$ with positive measure. We say that an allocation $x$ is \emph{blocked} by a coalition $S$ if there exists an allocation $y$ such that $\int_S y = \int_S \omega$, and $y(t) \succ_t x(t)$ for all $t$ in $S$.

\begin{definition}
The core of the economy $\mathscr{E}$ is the set of feasible allocations that are not blocked by any coalition.
\end{definition}

The core is denoted $C(\mathscr{E})$

\paragraph{Non-trivial linear economies} We call the economy \emph{linear} if every consumer's preference relation can be represented by a linear function $u_t: \mathbb{R}^l_+ \rightarrow \mathbb{R}$ defined by $u_t(x) = a_t \cdot x$, where $a_t$ is a vector in $\mathbb{R}^l_{++}$ such that $\sum_{i = 1} ^ l a_{ti} = 1$. An allocation for an economy is called \emph{Pareto optimal} if the coalition consisting of all consumers does not block the allocation. We call an economy \emph{non-trivial} if the initial allocation is not Pareto optimal.

Our first result provides a motivation for considering non-trivial economies. Recall that we want to characterise economies for which the core does not coincide with the set of competitive alklocations. Under fairly weak conditions on preferences, if the initial endowment is Pareto optimal then the core coincides with competitive allocations. Thus we exclude these economies from the analysis.

\begin{proposition}
Let $\mathscr{E}$ be an economy satisfying

\begin{enumerate}

  \item Monotonocity: let $x$ and $y$ be arbitrary commodity bunbles. Then\footnote{Let $x$ and $y$ be arbitrary points in $\mathbb{R}^l$. We write $x > y$ to mean that $x \neq y$ and $x_i \geq y_i$ for all $i = 1,2,.\cdots,l$} $x > y$ implies $x \succ_t y$
  \item Continuity: let $x$ be an arbitrary commodity bundle. The sets $\{z \in \mathbb{R}^l_+ : x \succ_t z\}$ and $\{z \in \mathbb{R}^l_+ : z \succ_t x\}$ are open
  \item Each component of $ \int_T \omega $ is positive
  \item The economy $\mathscr{E}$ has finitely many atoms and each atom has convex preferences    
\end{enumerate}

If $\omega$ is Pareto optimal, then $C(\mathscr{E}) = W(\mathscr{E})$.
\end{proposition}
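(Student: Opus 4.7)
\emph{Plan.} I would prove the two inclusions separately. The inclusion $W(\mathscr{E}) \subseteq C(\mathscr{E})$ is standard and does not use the Pareto-optimality hypothesis: if $(p,x) \in W(\mathscr{E})$ were blocked by a coalition $S$ with allocation $y$, then strict preference on $S$ combined with the second competitive-equilibrium condition gives $p \cdot y(t) > p \cdot \omega(t)$ for all $t \in S$; integrating contradicts $\int_S y = \int_S \omega$. (Monotonicity ensures $p \neq 0$, which is what makes the strict inequality survive integration over a set of positive measure.)

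\emph{Supporting $\omega$ by prices.} The heart of the argument is to realize $\omega$ itself as a competitive allocation, using Pareto optimality. Form the aggregate preferred set
\begin{equation*}
\Gamma = \Big\{ \int_T (y(t) - \omega(t))\, d\mu(t) : y \text{ an allocation with } y(t) \succ_t \omega(t) \text{ a.e.} \Big\} \subseteq \mathbb{R}^l.
\end{equation*}
Pareto optimality yields $0 \notin \Gamma$. Convexity of $\Gamma$ follows by decomposing $T$ into its atomless part $T_0$ and the finitely many atoms $A_1, \ldots, A_n$: on $T_0$, Aumann's Lyapunov-type argument gives a convex set of preferred integrals; on each $A_i$, convexity of the atom's preferences makes its contribution convex; and the Minkowski sum of convex sets is convex. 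Separation yields a non-zero $p \in \mathbb{R}^l$ with $p \cdot z \geq 0$ on $\Gamma$, and monotonicity forces $p \geq 0$. A cheaper-point argument---using continuity together with $p \cdot \int \omega > 0$ (which follows from $\int \omega \gg 0$ and $p \geq 0$, $p \neq 0$)---then upgrades the separation into the full competitive-equilibrium condition at each consumer, so $(p, \omega) \in W(\mathscr{E})$.

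\emph{From $(p,\omega) \in W(\mathscr{E})$ to $C(\mathscr{E}) \subseteq W(\mathscr{E})$.} Let $x \in C(\mathscr{E})$. Individual rationality (the set where $\omega(t) \succ_t x(t)$ would itself block $x$ via $y = \omega$) gives $x(t) \succeq_t \omega(t)$ a.e. Combined with $(p,\omega) \in W(\mathscr{E})$, monotonicity, and continuity, this implies $p \cdot x(t) \geq p \cdot \omega(t)$ a.e.\ (when $x(t) \sim_t \omega(t)$, perturb by $\varepsilon \mathbf{1}$, apply the competitive condition at $\omega$, and let $\varepsilon \to 0$). Integrating and invoking feasibility $\int x = \int \omega$ forces equality $p \cdot x(t) = p \cdot \omega(t)$ a.e. Finally, if $y \succ_t x(t)$, then by transitivity $y \succ_t \omega(t)$, so $p \cdot y > p \cdot \omega(t) = p \cdot x(t)$; hence $(p,x) \in W(\mathscr{E})$.

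\emph{Main obstacle.} The delicate step is passing from the separating hyperplane to a full competitive equilibrium supporting $\omega$ in this mixed atomic/atomless setting: the cheaper-point condition needed to rule out quasi-equilibrium pathologies requires care at atoms, where one cannot rely on averaging across a continuum. The hypotheses $\int \omega \gg 0$ and strict monotonicity are designed precisely to make this go through, but the bookkeeping at each atom (as opposed to just a.e.\ atomless consumer) is where the argument needs the most attention.
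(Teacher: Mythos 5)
Your plan is correct, and its centerpiece---realizing $\omega$ itself as a competitive allocation by separating the aggregate preferred set from the endowment point, with convexity supplied by Lyapunov's theorem on the atomless part and by the atoms' convex preferences on the (finite) atomic part---is exactly the paper's argument. Where you genuinely diverge is in the step from ``$\omega$ is competitive'' to ``every core allocation is competitive.'' The paper invokes Aumann's characterization: for a feasible $x$ it forms $\Gamma^x$, the convex hull of the union of the sets $\{z : z + \omega(t) \succ_t x(t)\}$, cites that $x$ is competitive iff $0 \notin \Gamma^x$, and concludes from individual rationality that $\Gamma^x \subseteq \Gamma^\omega$. You instead verify the equilibrium conditions directly: individual rationality plus monotonicity gives $p \cdot x(t) \geq p \cdot \omega(t)$ a.e., feasibility forces equality, and transitivity ($y \succ_t x(t) \succeq_t \omega(t)$) transfers the optimality condition from $\omega$ to $x$. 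Your route is more elementary and self-contained---it avoids importing a nontrivial ``iff'' whose only-if direction is delicate in the presence of atoms---at the cost of being specific to the situation where the supporting price for $x$ is inherited from $\omega$; the $\Gamma$-set formulation is the more portable tool. You also supply the inclusion $W(\mathscr{E}) \subseteq C(\mathscr{E})$, which the paper's proof silently omits. One caution: the ``cheaper-point'' upgrade you rightly flag as the delicate step is not fully resolved by $p \cdot \int_T \omega \gg 0$, which only guarantees a \emph{non-null set} of consumers with positive wealth; the way around this (used implicitly by the paper) is to argue in aggregate---if a non-null set $S$ could strictly improve within budget, shrink those bundles slightly using continuity to produce a preferred aggregate trade of strictly negative value, contradicting the separation---rather than consumer by consumer, and this also needs a word about why the shrunken bundles stay in $\mathbb{R}^l_+$.
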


\begin{proof}
Consider the allocation $f: T \rightarrow \mathbb{R}^l_+$ assigning almost all agents a commodity bundle preferred or indifferent to their initial endowment, and a non-null set of agents a commodity bundle preferred to their initial endowment. Let $F$ be the set of integrals of all such allocations. Let $I$ be the set of integrals over the atomless part of the economy of all such allocations. Let $G$ be the set of integrals over the atomic part of the economy of all such allocations. We have that $F = I + G$. By Lyapunov's convexity theorem $I$ is convex. By the convexity of each atom's preferences and the assumption that there are finmitely many atoms $G$ is convex. Thus $F$ is convex as the sum of two convex sets. The set $F$ is disjoint from $\int \omega$ because $\omega$ is Pareto optimal. By Minkowski's separating hyperplane theorem there exists a price sytstem $p$ such that $x \in F$ implies $p \cdot x \geq p \cdot \int \omega$. We claim $(p, \omega)$ is a competitive equilibrium. Suppose not. Then there exists a non-null set of agents $S$ and a commodity bundle $y(t)$ for each agent in $S$ such that $y(t) \succ_t \omega_t$ and $p \cdot y(t) \leq p \cdot \omega(t)$. Define $z: T \rightarrow \mathbb{R}^l$ by

\[z(t) = 
\begin{cases} 
      \omega(t) & \text{if } t \in T \setminus S \\
      y(t) & \text{if } t \in S.
\end{cases}
\]

We have $p \cdot \int z \geq p \cdot \int \omega$. If this holds with equality, then for almost all agents in $S$ we have $p \cdot y(t) = p \cdot \omega(t)$. By continuity of preferences, $y(t) - (\epsilon_t, \cdots, \epsilon_t) \succ_t \omega(t)$ for sufficiently small $\epsilon_t > 0$. Define 

\[z(t) = 
\begin{cases} 
      \omega(t) & \text{if } t \in T \setminus S \\
      y(t) -  (\epsilon_t, \cdots, \epsilon_t)& \text{if } t \in S.
\end{cases}
\]

By monotonicity and linearity of the integral we have $p \cdot \int \bar z < p \cdot \int \omega$, contradicting that $\int \bar z$ belongs to $F$. Thus

\[
p \cdot \int z > p \cdot \int \omega
\]

which implies $p \cdot \int_S y > p \cdot \int_S \omega$, contradicting $p\cdot y(t) \leq p \cdot \omega(t)$ for all $t \in S$. Therefore $(p, \omega)$ is a competitive equilibrium.

We now apply an argument of \cite{Aumann64} to show that every core allocation is a competitive allocation. For any feasible allocation $x$ let $\Gamma^x_t$ be the set of all commodity bundles $z$ in the commodity space such that $z + \omega(t) \succ_t x(t)$ and let $\Gamma^x$ be the convex hull of the union of all $\Gamma^x_t$, where the union is taken over any measurable set $U$ such that $\mu(T \setminus U) = 0$. As in \cite{Aumann64} $x$ is a competitive allocation if and only if the origin does not belong to $\Gamma^x$. Now, let $x$ be a core allocation. Since $x(t) \succeq_t \omega(t)$ or almost all $t$ we have $\Gamma^x_t \subseteq \Gamma^\omega_t$ and thus $\Gamma^x \subseteq \Gamma^\omega$. Since $\omega$ is a competitive allocation the origin does not belong to $\Gamma^\omega$, and thus also not to $\Gamma^x$. Hence $x$ is a competitive allocation. 
\end{proof}

\paragraph{Unbalanced economies} Two consumers are said to be of the same type if they have the same preference relation and the same initial endowment. An economy is called an \emph{n-type} economy if the set of consumers can be partitioned into $n$ non-null sets, each containing consumers of a single type. The economy is called \emph{unbalanced} if an element of this partition consists of an atom, i.e. there is an atom whose type is not shared by any other non-null set of agents.

\section{Results}

\subsection{Main results}

The main result is:

\begin{theorem}\label{non_competitive_core_allocation}
Let $\mathscr{E}$ be a two-type unbalanced non-trivial linear economy. There exists a non-competitive core allocation for $\mathscr{E}$.
\end{theorem}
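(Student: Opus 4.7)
The plan is to construct a non-competitive core allocation explicitly by perturbing a competitive equilibrium in a direction that transfers a small amount of one commodity from the non-atomic type to the atom. First, non-triviality forces the two preference vectors $a_1, a_2$ to differ: if $a_1 = a_2 = a$, then $\int a \cdot y = a \cdot \int \omega$ for every feasible $y$, so no allocation could strictly Pareto dominate $\omega$ and the endowment would already be Pareto optimal. Label type $1$ the atom (endowment $\omega_1$) and type $2$ the other partition element (of total measure $m$, typical endowment $\omega_2$). Since $a_1 \neq a_2$, I can pick coordinates $i, j$ with $a_{1i}/a_{1j} > a_{2i}/a_{2j}$ and restrict attention to trades involving only these coordinates, effectively reducing to $l = 2$ with the atom relatively favoring good $1$ over good $2$.

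In this reduced setting I enumerate the competitive equilibria by elementary computation; generically the CE is unique, with the atom at the corner holding all of good $1$ and no good $2$ and with supporting prices proportional to $a_2$. I construct the candidate core allocation $\tilde x$ by transferring $\varepsilon > 0$ of good $2$ from each type-$2$ agent to the atom, so the atom's new bundle $\tilde x_1$ is strictly positive in both coordinates. This allocation is feasible, Pareto optimal (it lies on the borderline face of the Pareto set on which the atom holds all of good $1$ and a variable share of good $2$), and satisfies both individual rationality constraints for small $\varepsilon$ by continuity, using that both constraints are strict at the unperturbed CE, which follows from non-triviality together with the atom's type being unique.

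For the core property, coalitions within type $2$ alone cannot block: by linearity the coalition's total utility on any feasible reallocation equals $a_2 \cdot \int_S \omega$, so a strict improvement for every member would require some member's allocation utility to fall below endowment utility, contradicting individual rationality. For a coalition of the atom together with a fraction $\mu' \in (0, m)$ of type-$2$ agents, I solve the linear program maximizing atom utility subject to coalition feasibility and promise-keeping for the included type-$2$ members. Linearity implies the minimum cost of delivering type-$2$'s promise is a simple scalar, and after substitution the condition that the LP value does not exceed the atom's utility at $\tilde x_1$ reduces to $(1 - \mu'/m)\, A \geq 0$, where $A = a_1 \cdot \tilde x_1 - a_1 \cdot \omega_1 > 0$ is the atom's individual-rationality surplus at $\tilde x_1$. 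This holds for all $\mu' < m$, and the grand coalition ($\mu' = m$) cannot block by Pareto optimality of $\tilde x$.

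Finally, to verify non-competitiveness: $\tilde x_1$ is strictly positive in both coordinates, so any supporting price vector must be proportional to $a_1$ (otherwise the atom's demand would be at a corner of its budget set). But then $p \cdot \tilde x_1 = p \cdot \omega_1$ forces $A = 0$, contradicting strict positivity; hence $\tilde x \in C(\mathscr{E}) \setminus W(\mathscr{E})$. The main obstacle is the sub-coalition blocking verification, both as the technical heart of the argument and as the place where the unbalanced hypothesis plays its role: with two atoms of the same type, as in Shitovitz's setting, any rent one atom extracts could be undercut by a coalition offering the other a better deal, whereas with a unique atom no such undercutting is possible. Linearity of preferences is what makes the blocking LP tractable and reduces it to a single scalar inequality.
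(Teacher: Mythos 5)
There are genuine gaps here, and the construction fails in the very case you describe. You place the unperturbed competitive equilibrium at supporting prices proportional to $a_2$; but then every type-two agent's budget constraint reads $a_2 \cdot x_2^{CE} = a_2\cdot\omega_2$, so type two's individual-rationality constraint \emph{binds} at the competitive equilibrium rather than being strict. This contradicts your later claim that both constraints are strict at the unperturbed equilibrium, and it is fatal: after you transfer $\varepsilon$ of good $2$ from each type-two agent to the atom, every type-two agent is strictly worse off than at $\omega_2$, so the coalition $T\setminus A_1$ blocks $\tilde x$ simply by reverting to its endowments. Your own argument for why type-two coalitions cannot block explicitly relies on $u_2(\tilde x_2)\geq u_2(\omega_2)$, which your perturbation destroys. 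Two further problems: ``generically the competitive equilibrium is unique'' cannot carry a theorem quantified over \emph{all} two-type unbalanced non-trivial linear economies --- the location of the equilibrium (interior price, $p=a_1$, or $p=a_2$) and hence which individual-rationality constraint is slack genuinely varies with the endowments; and the reduction to $l=2$ is not innocent for core and decentralization arguments, since a blocking coalition may reallocate all $l$ commodities and a candidate supporting price has $l$ components, only two of which are pinned down by the support of $\tilde x_1$, so $p\cdot\tilde x_1 = p\cdot\omega_1$ does not yield $a_1\cdot\tilde x_1 = a_1\cdot\omega_1$ unless $\omega_1$ happens to be supported on the same two coordinates.

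The repair is essentially the paper's construction: rather than perturbing a competitive equilibrium downward for type two, hold type two \emph{exactly} at its endowment utility ($u_2(x_2^*) = u_2(\omega_2)$) and let the atom maximize $u_1$ over the feasible allocations satisfying that constraint. Core membership then follows from an averaging argument (a blocking coalition must contain the atom and a non-null set of type-two agents, and rescaling its proposal produces a feasible point of the constrained problem with larger $u_1$, a contradiction), and non-competitiveness follows because the binding type-two constraint forces any supporting price to equal $a_2$, at which the atom's demand is infeasible with the aggregate endowment (Lemmas \ref{subset}--\ref{no_decentralization}). Your intuition about why uniqueness of the atom's type matters --- no second atom available to undercut the rent --- is correct, but the argument as written does not establish the theorem.
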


Equivalently:

\begin{theorem}\label{atomless_measure_space} 
Let $(T, \mathscr{T}, \mu)$ be a measure space and $a_1$ and $a_2$ distinct vectors in $\mathbb{R}^l_{++}$ such $\sum_{i=1}^l a_ti = 1$ for $t = 1,2$. Consider the set of two-type non-trivial linear economies on the measure space $(T, \mathscr{T}, \mu)$ such that the preference relation of type $i$ consumers is represented by $u_i(x) = a_i \cdot x$. If for all such economies we have $C(\mathscr{E}) \subseteq W(\mathscr{E})$, then $(T, \mathscr{T}, \mu)$ has no atoms. 
\end{theorem}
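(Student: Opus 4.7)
The plan is to prove the contrapositive via Theorem~\ref{non_competitive_core_allocation}: assuming $(T, \mathscr{T}, \mu)$ contains an atom $A$, I would exhibit a two-type unbalanced non-trivial linear economy on this measure space whose type-$i$ preference is represented by $u_i(x) = a_i \cdot x$, and then apply Theorem~\ref{non_competitive_core_allocation} to produce a non-competitive core allocation, contradicting the hypothesis that $C(\mathscr{E}) \subseteq W(\mathscr{E})$ for every such economy.

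For the construction I would declare every consumer in $A$ to be of type $1$ and every consumer in $T \setminus A$ to be of type $2$; because $a_1 \neq a_2$ these are genuinely distinct types, yielding a two-type economy whose type-$1$ partition block is the single atom $A$, with no other non-null set of consumers carrying type-$1$ characteristics. The economy is therefore unbalanced. (I am tacitly assuming $T \setminus A$ is non-null, which is necessary for any two-type economy to exist on the space.) Next I would pick $\omega_1, \omega_2 \in \mathbb{R}^l_{++}$ so that the initial allocation fails Pareto optimality. Since $a_1$ and $a_2$ are distinct points in the interior of the simplex, they assign different marginal rates of substitution to some pair of goods $i \neq j$; fixing $\alpha$ strictly between the two MRSs, a small bilateral trade in which type-$1$ consumers give up $\alpha \epsilon$ units of good $j$ for $\epsilon$ units of good $i$, and type-$2$ consumers perform the reverse trade scaled by $\mu(A)/\mu(T \setminus A)$ to preserve aggregate supply, strictly raises $u_1$ on $A$ and $u_2$ on $T \setminus A$ simultaneously. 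The grand coalition therefore blocks $\omega$, so the economy is non-trivial.

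This yields a two-type unbalanced non-trivial linear economy of the preference form prescribed in Theorem~\ref{atomless_measure_space}, and Theorem~\ref{non_competitive_core_allocation} then furnishes a non-competitive core allocation, establishing $C(\mathscr{E}) \not\subseteq W(\mathscr{E})$ and completing the contrapositive. The main obstacle I anticipate is the degenerate case in which $T$ consists, up to null sets, of a single atom: no two-type economy exists on such a space, the hypothesis of Theorem~\ref{atomless_measure_space} is vacuously satisfied, and yet the conclusion fails. This pathology must be excluded by interpretation, or by implicitly demanding that $(T, \mathscr{T}, \mu)$ admit a partition into two non-null measurable sets.
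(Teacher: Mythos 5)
Your proposal is correct and follows the same route as the paper, which proves the contrapositive by invoking Theorem~\ref{non_competitive_core_allocation} on a two-type unbalanced non-trivial linear economy built over the atom; the paper's own proof is a single sentence that leaves the construction implicit, whereas you supply the explicit assignment of types and the verification of non-triviality. Your observation about the degenerate case where $T$ is, up to null sets, a single atom is a genuine gap in the theorem as stated (the hypothesis is then vacuous while the conclusion fails), and your proposed fix of requiring a partition into two non-null measurable sets is the right one.
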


The idea of the proof is to look at the core allocation most preferred by the atom whose type is unique and show that it cannot be competitive. The approach can be summarised as follows. Let $\mathscr{E}$ be a two-type unbalanced non-trivial linear economy. We first show that if one restricts attention to a certain class of allocations, then $\mathscr{E}$ is equivalent to a non-trivial linear economy with two atoms. We then show that for any non-trivial linear economy with two atoms there exists a core allocation that is not competitive.

\subsection{Rescaling lemma} Here we rehash a result from \cite{mertens2013game} that will allow us to rescale one economy to another in such a way that the relation between core and competitive equilibria remains fixed. The rescaling allows us to choose the measure of subsets of consumers. Later we will find it useful to rescale certain subsets of consumers to have measure one.

To define the transform, fix a set of consumers $T$. Let $\mathbb{E}$ denote the class of all economies in which the set of consumers is $T$ and $\Lambda$ the class of all simple functions $\lambda = \sum_{i=1}^n \lambda_i \chi_{G_i}$ (where $\chi_{G_i}$ denotes the indicator function on the set $G_i$) from the set of consumers to the positive real numbers. Define a mapping $\Xi: \mathbb{E} \times \Lambda \to \mathbb{E} $ as follows. For an economy $\mathscr{E} = ((T, \mathscr{T}, \mu), \omega, \succeq)$ in $\mathbb{E}$ and a simple function $\lambda$ in $\Lambda$ let $\Xi(\mathscr{E}, \lambda)$ be the economy $((T, \mathscr{T}, \mu), \hat \omega, \hat \succeq)$in $\mathbb{E}$ defined by:

\begin{enumerate}
  \item $\hat \omega = \lambda \omega$
  \item $\succeq_t$ is defined by $\lambda(t)x \hat \succeq \lambda(t)y$ if and only if $x \succeq_t y$
  \item $\hat \mu$ is defined by the equation
    \[\hat \mu(S) = \sum_{i=1}^n \frac{\mu(S \cap G_i)}{\lambda_i}\]
for any $S$ in $\mathscr{T}$.
\end{enumerate}  

Notice that $(\hat \mu)$ is a measure on $(T, \mathscr{T})$. Also, it is clear that if $x$ is an allocation and $\lambda$ an element of $\Lambda$, then $\lambda x$ is integrable. Thus the mapping is well defined. We now state and prove the rescaling lemma.

\begin{lemma}\label{rescaling_lemma}
Let $\mathscr{E}$ be an economy. For any $\lambda$ in $\Lambda$ we have:
\begin{enumerate}
  \item An allocation $x \in C(\mathscr{E})$ if and only if $\lambda x \in C(\mathscr{E})$
  \item An allocation $x \in W(\mathscr{E})$ if and only if $\lambda x \in W(\mathscr{E})$
\end{enumerate}
\end{lemma}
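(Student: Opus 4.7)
Reading the second $\mathscr{E}$ in each bullet through the transform $\Xi$ defined immediately above — which is forced by the fact that $\lambda x$ is generically infeasible against the original measure space $(T,\mathscr{T},\mu)$ — the lemma asserts that the bijection $x \mapsto \lambda x$ on allocations carries $C(\mathscr{E})$ onto the core of the rescaled economy and $W(\mathscr{E})$ onto its set of competitive allocations. The plan is to prove each bullet by checking that the three primitives altered by $\Xi$ — measure, endowment, and preferences — each contribute an exactly offsetting rescaling to the conditions defining feasibility, blocking, and budget-constrained optimality.

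The workhorse identity is $\int_S \lambda f\, d\hat\mu = \int_S f\, d\mu$ for every measurable $S$ and every integrable $f$. I would prove this by a piecewise computation: on each $G_i$, $\lambda$ equals the constant $\lambda_i$ and $\hat\mu$ scales $\mu$ by $1/\lambda_i$, so the factors cancel; summing over $i$ gives the identity. Applied to $f = x$ and $f = \omega$, this shows that $\lambda x$ is feasible in $\Xi(\mathscr{E},\lambda)$ iff $x$ is feasible in $\mathscr{E}$, and analogously that any coalition-level budget identity $\int_S y\, d\mu = \int_S \omega\, d\mu$ translates to $\int_S \lambda y\, d\hat\mu = \int_S \lambda \omega\, d\hat\mu$.

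For the core statement, note first that the collection of coalitions in the two economies coincides as a set: positivity of $\mu$ and $\hat\mu$ agree on every measurable set because each $\lambda_i > 0$. If a coalition $S$ blocks $x$ in $\mathscr{E}$ via an allocation $y$, the identity above combined with the defining equivalence $z \succ_t w \iff \lambda(t) z \,\hat\succ_t\, \lambda(t) w$ shows that $\lambda y$ blocks $\lambda x$; the reverse direction is obtained by dividing a blocking allocation pointwise by $\lambda$. For the competitive statement, a common price $p \in \mathbb{R}^l_+$ serves both economies: positivity of $\lambda(t)$ makes $p \cdot \lambda(t) x(t) \leq p \cdot \lambda(t) \omega(t)$ equivalent to $p \cdot x(t) \leq p \cdot \omega(t)$, and every $\tilde y$ with $\tilde y \,\hat\succ_t\, \lambda(t) x(t)$ factors uniquely as $\tilde y = \lambda(t) z$ with $z \succ_t x(t)$, so the strict-preference budget condition also transports.

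The only technical point is the change-of-variable identity, and it is precisely this that motivates restricting $\Lambda$ to strictly positive simple functions: it lets one pull out the constant $\lambda_i$ on each $G_i$ and cancel with the reciprocal built into $\hat\mu$, while positivity of each $\lambda_i$ keeps the set of coalitions unchanged. With that identity in hand, each of the two equivalences reduces to a line of bookkeeping in the definitions, and no further obstacle is anticipated.
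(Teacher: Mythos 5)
Your proof is correct and follows essentially the same route as the paper's: establish the change-of-variables identity $\int_S \lambda f\, d\hat\mu = \int_S f\, d\mu$ (which the paper writes informally as $\int \lambda x\, \frac{d\mu}{\lambda} = \int x\, d\mu$), then transport feasibility, blocking, and the two competitive-equilibrium conditions through the preference equivalence $z \succ_t w \iff \lambda(t)z \,\hat\succ_t\, \lambda(t)w$, with the converse by dividing through by $\lambda$. You correctly read the statement's second $C(\mathscr{E})$ and $W(\mathscr{E})$ as referring to $\Xi(\mathscr{E},\lambda)$, and your explicit remarks that coalitions coincide across the two economies and that candidate bundles factor as $\tilde y = \lambda(t)z$ are small points of added care over the paper's version.
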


\begin{proof}
First it is shown that an allocation $x$ for $\mathcal{E}$ is feasible if and only if $\lambda x$ is feasible for $\Xi(\mathscr{E}, \lambda)$. To see this let $x$ be an allocation fo $\mathscr{E}$. Then 

\[
\int \lambda x d\hat\mu = \int \lambda x \frac{d\mu}{\lambda} = \int x d \mu.
\]

To see that point 1 is true let $x$ be a feasible allocation for $\mathscr{E}$. Suppose $x$ can be blocked by a coalition $S$. Then there exists an allocation $y$ for $\mathscr{E}$ such that $\int_S y= 'int_S \omega$ and $y(t) \succ_t x(t)$ for all $t$ in $S$. By the previous observation $\lambda y$ is a feasible allocation for $\Xi(\mathscr{E}, \lambda)$ and $\int_S \lambda y d \hat \mu = \int_S \lambda \omega d\hat\mu$. By the definition of $\hat \succeq$ we have $\lambda(t) y(t) \hat \succ_t \lambda(t) x(t)$ for all $t$ in $S$. Thus $\lambda x$ can be blocked by $S$ in $\Xi(\mathscr{E}, \lambda)$. The converse holds by symmetry.

To see that point 2 is true let $(p, x)$ be a competitive equilibrium for $\mathscr{E}$. Then for $\mu$-almost all consumers $t$ in $T$ we have $p \cdot x(t) \leq p \cdot \omega(t)$, and that $y \succ_t x(t)$ implies $p \cdot y > p \cdot \omega(t)$. Since $\lambda(t)$ is a positive real number $p \cdot \lambda(t) x(t) \leq p \cdot \lambda(t) \omega(t)$ $\mu$-almost everywhere. By the definition of $\hat \succeq$ it is clear that $\lambda(t) y \hat \succ_t \lambda(t) x(t)$ implies $p \cdot \lambda(t) y > p \cdot \lambda(t) \omega(t)$ $\hat\mu$-almost everywhere. Thus $(p, \lambda x)$ is a competitive equilibrium for $\Xi(\mathscr{E}, \lambda)$. Once again the converse holds by symmetry.
\end{proof}

\subsection{A two atom economy} This section studies the class of non-trivial linear economies with two atoms. The result of this section says that if the price system facing each atom is the vector defining the other atom's utility function then the solution to one of the resulting utility maximisation problems is infeasible with the resources of the economy.

\paragraph{Starting observations} Let $\mathscr{E}$ be a non-trivial linear economy with two atoms (such an economy is two-type and unbalanced). By Lemma \ref{rescaling_lemma} th    ere we can assume without loss of generality that the measure of each atom is one. Denote the atoms $A_1$ and $A_2$. Since the economy is linear each atom's preferenc relation can be represented by a linear utility function $u_i: \mathbb{R}^l_+ \rightarrow \mathbb{R}$ defined by $u_i(x) = a_i \cdot x$ where $a_i$ belongs to $\mathbb{R}^l_+$ for both $i=1, 2$ such that $\sum_{j=1}^l a_{ij}$. Since the economy is non-trivial $a_i \in \mathbb{R}^l_{++}$ and $a_1 \neq a_2$ (for otherwise $\omega$ would be Pareto optimal.) Fix a price system $p$ and consider the solution to the following utility maximisation problem 

\[
x^*_i(p) = \arg \max \{a_i \cdot x_i : x_i \in \mathbb{R}^l_+, p\cdot x \leq p \cdot \omega(A_i)\}.
\] 

Let $x$ be a commodity bundle. Define the \emph{support} of $x$ to be $\text{supp} x = \{l:x_l > 0\}$. For each $A_i$, $i=1,2$, define the set of commodities yielding maximal marginal utility by $S(p, a_i) = \{i :  a_ip_k \geq a_kp_i \text{ for all } k=1,\cdots,l\}$. It is clear that each atom consumes only those commodities yielding maximal marginal utility.

\begin{lemma}\label{subset}
$\text{supp } x^*_i(p) \subseteq S(p, a_i)$
\end{lemma}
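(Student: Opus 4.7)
The plan is to argue by contradiction using a standard budget-preserving substitution. Suppose some commodity $j$ lies in $\text{supp } x^*_i(p)$ but $j \notin S(p, a_i)$, i.e.\ $x^*_{ij}(p) > 0$ while there exists a commodity $k$ with $a_{ij} p_k < a_{ik} p_j$. The idea is to exhibit a point in the budget set that yields strictly higher linear utility than $x^*_i(p)$, contradicting its optimality.

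First I would dispose of the degenerate case $p_k = 0$. Since the economy is non-trivial we have $a_i \in \mathbb{R}^l_{++}$, so $a_{ik} > 0$; then the strict inequality $a_{ij} p_k < a_{ik} p_j$ forces $p_j > 0$. In this case I can reduce $x^*_{ij}$ by a small $\epsilon > 0$ (still leaving it nonnegative because $x^*_{ij} > 0$) and add any positive quantity to the $k$-th coordinate at zero cost; the budget constraint is (strictly) satisfied and the utility strictly increases by $a_{ik}$ times the added amount, contradicting optimality.

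In the main case $p_k > 0$, I would define a perturbation $\tilde x$ that agrees with $x^*_i(p)$ except that its $j$-th coordinate is $x^*_{ij} - \epsilon$ and its $k$-th coordinate is $x^*_{ik} + \epsilon p_j / p_k$, for $\epsilon > 0$ small enough that $\tilde x_j \geq 0$. By construction $p \cdot \tilde x = p \cdot x^*_i(p) \leq p \cdot \omega(A_i)$, so $\tilde x$ is in the budget set, and a direct computation gives
\[
a_i \cdot \tilde x - a_i \cdot x^*_i(p) \;=\; \frac{\epsilon}{p_k}\bigl(a_{ik} p_j - a_{ij} p_k\bigr) \;>\; 0,
\]
contradicting the fact that $x^*_i(p)$ maximises $a_i \cdot x$ on the budget set. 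Hence every $j$ in the support must satisfy $a_{ij} p_k \geq a_{ik} p_j$ for all $k$, which is precisely $j \in S(p, a_i)$.

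The argument is essentially routine; the only mild subtlety is ensuring the substitution is well-defined when $p_k = 0$, handled by the preliminary case. No obstacle beyond bookkeeping is anticipated.
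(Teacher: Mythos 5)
Your proof is correct and is exactly the standard marginal-utility-per-unit-price substitution argument that the paper itself leaves implicit (Lemma~\ref{subset} is stated without proof; the text merely asserts the claim is ``clear''). One small quibble in your degenerate case $p_k = 0$: the net utility change is $a_{ik}\delta - a_{ij}\epsilon$ rather than $a_{ik}\delta$, where $\delta$ is the amount added to coordinate $k$, but since adding to coordinate $k$ is costless you may simply take $\epsilon = 0$ (or $\delta$ large enough), so the contradiction stands and the case does not arise anyway in the paper's application, where $p \in \{a_1, a_2\} \subseteq \mathbb{R}^l_{++}$.
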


We now show that if each atom $A_i$ faces prices $p = a_j$ with $j \neq i$, then the sets of commodities that yield maximal marginal utility for each agent are disjoint.

\begin{lemma}\label{disjoint}
$S(a_2, a_1) \cap S(a_1, a_2) = \emptyset$
\end{lemma}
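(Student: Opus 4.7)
The plan is to unpack the definitions of the two sets as maximizers and minimizers of a common ratio, and then invoke non-triviality ($a_1 \neq a_2$) to rule out a constant ratio.

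First I would rewrite the defining inequalities in ratio form. Since all coordinates of $a_1$ and $a_2$ are strictly positive, $j \in S(a_2, a_1)$ is equivalent to $a_{1j}/a_{2j} \geq a_{1k}/a_{2k}$ for every $k$, i.e.\ $j$ is a maximizer of the function $k \mapsto a_{1k}/a_{2k}$. Symmetrically, $j \in S(a_1, a_2)$ is equivalent to $a_{2j}/a_{1j} \geq a_{2k}/a_{1k}$ for every $k$, i.e.\ $j$ is a maximizer of the reciprocal function, which is the same as being a minimizer of $k \mapsto a_{1k}/a_{2k}$.

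Second, I would suppose for contradiction that some index $j$ lies in both sets. Then $j$ is simultaneously a maximizer and a minimizer of $k \mapsto a_{1k}/a_{2k}$, so this function is constant: there exists $c > 0$ with $a_{1k} = c\, a_{2k}$ for every $k$.

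Third, I would use the normalization assumed in the Starting observations: both $a_1$ and $a_2$ are probability vectors, so summing the identity $a_{1k} = c\, a_{2k}$ over $k$ gives $1 = c \cdot 1$, hence $c = 1$ and $a_1 = a_2$. This contradicts the observation recorded in the Starting observations that $a_1 \neq a_2$ (which follows from non-triviality, since otherwise $\omega$ would be Pareto optimal). Therefore $S(a_2, a_1) \cap S(a_1, a_2) = \emptyset$.

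The argument is short and the only potential sticking point is correctly interpreting the definition of $S(p, a_i)$, whose displayed form in the excerpt overloads the symbol $i$; once one reads it as ``$j$ belongs to $S(p, a_i)$ iff $a_{ij} p_k \geq a_{ik} p_j$ for all $k$,'' the proof reduces to the ratio observation above and there is no real obstacle.
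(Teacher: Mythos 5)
Your proof is correct and is essentially the paper's argument: the paper works with the cross-product identity $a_{1i}a_{2k}=a_{1k}a_{2i}$ for all $k$, which is exactly your statement that the ratio $k\mapsto a_{1k}/a_{2k}$ is constant, and both arguments then use the normalization $\sum_k a_{ik}=1$ to force $a_1=a_2$, contradicting non-triviality. Your reading of the (typo-ridden) definition of $S(p,a_i)$ matches the one the paper uses in its own proof, so there is no gap.
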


\begin{proof}
Suppose not. Then there exists a commodity $i$ belonging to both $S(a_2, a_1)$ and $S(a_1, a_2)$ so that $a_{1i} a_{2k} \geq a_{1k}a_{2i}$ and $a_{2i} a_{1k} \geq a_{2k}a_{1i}$ for all $k=1,\cdots,l$. This implies

\[
a_{1i}a_{2k} = a_{1k}a_{2i} \text{ for all } k = 1, \cdots, l.
\]

If $a_{1i} = a_{2i}$, then $a_1 = a_2$, a contradiction. Thus there exists a positive number $\alpha$, not equal to one, such that $a_{1i} = \alpha a_{2i}$. But then $a_{1k} = \alpha a_{2k}$ for all $k=1,\cdots, l$ so that 

\[1 = \sum_{k = 1} ^ l a_{1k} = \sum_{k=1}^l \alpha a_{2k} = \alpha,\]
a contradiction.
\end{proof}

\paragraph{Decentralization}   

We now show that price systems $p = a_1$ and $p = a_2$ are not part of any competitive equilibrium.

\begin{lemma}\label{no_decentralization}
$x^*_1(a_2) \not\leq \omega(A_1) + \omega(A_2)$ and $x^*_2(a_1) \not\leq \omega(A_1) + \omega(A_2)$
\end{lemma}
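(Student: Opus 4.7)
The plan is to prove the two inequalities separately, by symmetric arguments. Write $T := \omega(A_1)+\omega(A_2)$. I focus on establishing $x^*_1(a_2)\not\leq T$; the statement $x^*_2(a_1)\not\leq T$ then follows by interchanging the labels $A_1$ and $A_2$ throughout the argument (preferences, budgets, and the sets $S(a_2,a_1)$ and $S(a_1,a_2)$ all flip roles cleanly).

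For $x^*_1(a_2)\not\leq T$, I would argue by contradiction. Suppose $x^*_1(a_2)\leq T$ and set $y_1:=x^*_1(a_2)$, $y_2:=T-y_1\geq 0$, so $(y_1,y_2)$ is a feasible allocation. A direct calculation gives $a_2\cdot y_2=a_2\cdot T-a_2\cdot x^*_1(a_2)=a_2\cdot\omega(A_2)$, so atom $2$ is exactly indifferent between $y_2$ and $\omega(A_2)$. Non-triviality is then used to upgrade atom $1$'s side to a strict inequality: from a strict Pareto improvement $(\tilde y_1,\tilde y_2)$ of $\omega$, the inequality $a_2\cdot\tilde y_2>a_2\cdot\omega(A_2)$ forces $a_2\cdot\tilde y_1<a_2\cdot\omega(A_1)$, so $\tilde y_1$ lies strictly inside atom $1$'s $a_2$-budget, and optimality of $x^*_1(a_2)$ yields $a_1\cdot y_1\geq a_1\cdot\tilde y_1>a_1\cdot\omega(A_1)$.

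To finish, I would invoke Lemma \ref{subset} and Lemma \ref{disjoint}: since $\text{supp}\,y_1\subseteq S(a_2,a_1)$ and $S(a_2,a_1)\cap S(a_1,a_2)=\emptyset$, the bundle $y_2$ contains all of $T$'s mass on $S(a_1,a_2)$. The aim is then to exhibit a small swap — taking $\varepsilon>0$ units of a commodity in $S(a_1,a_2)\cap\text{supp}\,y_2$ from $A_2$ and giving $A_1$ an appropriate amount of a commodity in $S(a_2,a_1)\cap\text{supp}\,y_1$ in return — that strictly raises atom $2$'s utility while keeping atom $1$ at a bundle strictly better than $x^*_1(a_2)$ whose $a_2$-cost does not exceed $a_2\cdot\omega(A_1)$. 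Such a swap would contradict the maximality of $x^*_1(a_2)$ in atom $1$'s $a_2$-budget.

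The main obstacle is showing that the swap just described can actually be carried out under the hypotheses, i.e.\ that non-triviality alone forces $\omega(A_1)$ to have some support outside $S(a_2,a_1)$ and $\omega(A_2)$ to have some support inside $S(a_1,a_2)$, so that the relevant commodities exist in the right places. This is where the distinctness of $a_1$ and $a_2$, the strict positivity of the preference vectors, and the precise form of non-triviality have to be combined carefully; once it is in place, the chain of inequalities yields the required contradiction for $x^*_1(a_2)\not\leq T$, and the symmetric inequality $x^*_2(a_1)\not\leq T$ is obtained by repeating the argument with the atoms' roles reversed.
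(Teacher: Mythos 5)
Your opening steps are sound and coincide with the paper's: Walras' law gives $a_2\cdot x^*_1(a_2)=a_2\cdot\omega(A_1)$, and the non-triviality argument (take a strict Pareto improvement $(\tilde y_1,\tilde y_2)$ of $\omega$, note $a_2\cdot\tilde y_1<a_2\cdot\omega(A_1)$, and invoke optimality) correctly yields $a_1\cdot x^*_1(a_2)>a_1\cdot\omega(A_1)$. The gap is the endgame. The ``swap'' you hope to exhibit is a bundle for atom $1$ that is strictly $a_1$-better than $x^*_1(a_2)$ yet costs at most $a_2\cdot\omega(A_1)$ at prices $a_2$; no such bundle exists by the very definition of $x^*_1(a_2)$ as the maximizer over that budget set, so the hypothesis $x^*_1(a_2)\leq\omega(A_1)+\omega(A_2)$ would have to supply the materials for building one, and it does not. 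Indeed the obstacle you flag at the end is insurmountable because each conjunct, taken on its own, is false under the stated hypotheses: with $l=2$, $a_1=(0.9,0.1)$, $a_2=(0.1,0.9)$, $\omega(A_1)=(10,10)$, $\omega(A_2)=(0.1,0.1)$, the economy is non-trivial, yet $x^*_2(a_1)=(0,1)\leq(10.1,10.1)=\omega(A_1)+\omega(A_2)$. So the strategy of proving the two inequalities separately by symmetric arguments cannot be completed.

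What is actually provable --- and what the paper's argument establishes and the main theorem uses, ``interchanging $a_1$ and $a_2$ if necessary'' --- is that the two dominations cannot hold \emph{simultaneously}. For that you must assume both $x^*_1(a_2)\leq\omega(A_1)+\omega(A_2)$ and $x^*_2(a_1)\leq\omega(A_1)+\omega(A_2)$ at once. Lemmas \ref{subset} and \ref{disjoint} then give $\text{supp } x^*_1(a_2)\cap\text{supp } x^*_2(a_1)=\emptyset$, so the two bundles fit under the aggregate endowment together: $x^*_1(a_2)+x^*_2(a_1)\leq\omega(A_1)+\omega(A_2)$. Dotting with $a_1$ and applying Walras' law to the \emph{other} atom's maximizer, $a_1\cdot x^*_2(a_1)=a_1\cdot\omega(A_2)$, forces $a_1\cdot x^*_1(a_2)\leq a_1\cdot\omega(A_1)$, contradicting the strict inequality you already derived from non-triviality. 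Your proposal never brings in $x^*_2(a_1)$; the residual bundle $y_2=\omega(A_1)+\omega(A_2)-x^*_1(a_2)$ you work with is not atom $2$'s maximizer, Walras' law at prices $a_1$ does not apply to it, and the identity $a_2\cdot y_2=a_2\cdot\omega(A_2)$ you compute leads nowhere. The disjoint-support decomposition is not a garnish here; it is the step that lets the two budget identities cancel, and it requires both maximizers on the table at once.
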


\begin{proof}
Suppose not. Then $x^*_1(a_2) \leq \omega(A_1) + \omega(A_2)$ or $x^*_2(a_1) \leq \omega(A_1) + \omega(A_2)$. By Lemma \ref{subset} and Lemma \ref{disjoint} 

\[
\text{supp } x^*_1(a_2) \cap \text{ supp } x^*_2(a_1) = \emptyset. 
\]

This together with $x^*_1(a_2) \leq \omega(A_1) + \omega(A_2)$ and $x^*_2(a_1) \leq \omega(A_1) + \omega(A_2)$ gives

\[
x^*_1(a_2) + x^*_2(a_1) \leq \omega(A_1) + \omega(A_2).
\]

By Walras' law $a_1 \cdot x^*_2(a_1) = a_1 \cdot \omega(A_2)$ which, using the previous inequality, gives $a_1 \cdot x^*_1(a_2) \leq a_1 \cdot \omega(A_1)$. This turns out to be a contradiction. 

To see this recall that $\omega$ is not Pareto optimal so there exist commodity bundles $x_1$ and $x_2$ such that $x_1 + x_2 = \omega(A_1) + \omega(A_2)$, $a_1 \cdot x_1 > a_1 \cdot \omega(A_1)$ and $a_2 \cdot x_2 > a_2 \cdot \omega(A_2)$. Commodity bundle $x_1$ is affordable since $a_2 \cdot (x_1 + x_2) = a_2 \cdot( \omega(A_1) + \omega(A_2))$ and the last inequality implies

\[
a_1 \cdot x_1 < a_2 \cdot \omega(A_1).
\]

Since $x_1$ is affordable under prices given by $a_2$ and it is also preferred to the initial allocation $\omega(A_1)$ it must be that $x_1^*(a_2)$ is also preferred to the initial allocation $\omega(A_1)$, i.e. $a_1 \cdot x^*_1(a_2) > a_1 \cdot \omega(A_1)$ providing the desired contradiction.
\end{proof}

\paragraph{Core allocation} 

This section studies the class of two-type non-trivial linear economies. The main result is a characterisation of a core allocation for such economies.

Fix a two-type unbalances non-trivial economy $\mathscr{E}$. Since $\mathscr{E}$ is unbalanced there is an atom of a unique type. Denote this atom $A_1$. We will call consumers belonging to $A_1$ type one and consumers belonging to $T \setminus A_1$ type two. To simplify notation we denote the initial endowment of a consumer of type $i$ by $\omega_i$. Denote the utility function representing the preferences of a consumer of type $i$ by $u_i(x) = a_i \cdot x$.

It happens that in such aneconomy there is a core allocation in which all consumers of type two get a commodity bundle indifferent to their initial commodity bundle and the atom of type one maximises his utility over the set of feasible allocations satisfying this constraint. We now construct this allocation. Consider the following set of allocations $X = F \cap W$ where 

\[
F = \{(x_1, x_2) \in \mathbb{R}^{2l}_+ : \mu(A_1)x_1 + \mu(T \setminus A_1)x_2 \leq \mu(A_1) \omega_1 + \mu(T \setminus A_1) \omega_2\}
\]

\[W = \{(x_1, x_2) \in \mathbb{R}^{2l}_+: u_2(x_2) = u_2(\omega_2)\}.\]

Define the maximiser

\begin{equation}\label{core_max}
(x^*_1, x^*_2) = \arg \max \{u_1(x_1) : (x_1, x_2) \in X\}.
\end{equation}

This is well defined: X is closed and bounded. By the Heine-Borel property X is compact. As $u_1$ is continuous we obtain a solution by Weierstrass' theorem.

The main result of this section is:

\begin{lemma}\label{core_allocation}
The allocation $x^*:T \to \mathbb{R}^l_+$ that assigns $x_i^*$ defined in equation \ref{core_max} to each consumer of type $i$ belongs to $C(\mathscr{E})$.
\end{lemma}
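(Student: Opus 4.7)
The plan is to verify feasibility of $x^*$ and then rule out blocking coalitions $S$ by splitting on the measure of $S\cap A_1$, using that $A_1$ is an atom so $\mu(S\cap A_1)\in\{0,\mu(A_1)\}$.

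First I would note that at the maximizer the vector resource constraint in $F$ must bind: since $a_1\in\mathbb{R}^l_{++}$, any coordinate of slack could be absorbed by adding a small positive multiple of a basis vector to $x_1^*$ while leaving $x_2^*$ (and hence the $W$-constraint) fixed, contradicting the maximality of $u_1(x_1^*)$. This yields $\mu(A_1)x_1^* + \mu(T\setminus A_1)x_2^* = \mu(A_1)\omega_1 + \mu(T\setminus A_1)\omega_2$, so $x^*$ is a feasible allocation in the sense of the model.

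Now suppose a coalition $S$ blocks $x^*$ via an allocation $y$. If $\mu(S\cap A_1)=0$, then up to a null set $S$ consists of type-$2$ consumers, for whom $x^*(t)=x_2^*$ satisfies $a_2\cdot x_2^* = a_2\cdot\omega_2$ by the $W$-constraint; integrating the strict preference $a_2\cdot y(t) > a_2\cdot\omega_2$ over $S$ contradicts $\int_S y\,d\mu = \int_S \omega\,d\mu$. The substantive case is $A_1\subseteq S$. Writing $S_2 = S\cap(T\setminus A_1)$, $R = (T\setminus A_1)\setminus S_2$, and $y_1$ for the essentially constant value of $y$ on $A_1$, the strategy is to convert the block into a candidate $(\tilde x_1,\tilde x_2)\in X$ that strictly improves on $(x_1^*,x_2^*)$ in $u_1$, contradicting the definition of the maximizer. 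Define
\[
\bar z = \frac{1}{\mu(T\setminus A_1)}\left(\int_{S_2} y\,d\mu + \mu(R)\omega_2\right),
\]
so the coalitional budget rewrites as $\mu(A_1)y_1 + \mu(T\setminus A_1)\bar z = \mu(A_1)\omega_1 + \mu(T\setminus A_1)\omega_2$, and strict type-$2$ blocking on $S_2$ gives $a_2\cdot\bar z > a_2\cdot\omega_2$ whenever $\mu(S_2)>0$.

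Then I would set $\beta=(a_2\cdot\omega_2)/(a_2\cdot\bar z)\in(0,1)$, $\tilde x_2 = \beta\bar z$, and $\tilde x_1 = y_1 + \frac{(1-\beta)\mu(T\setminus A_1)}{\mu(A_1)}\bar z$. Substitution confirms $a_2\cdot\tilde x_2 = a_2\cdot\omega_2$ and that the full-economy resource constraint holds with equality; since $y_1,\bar z\geq 0$, also $\tilde x_1,\tilde x_2\geq 0$, so $(\tilde x_1,\tilde x_2)\in F\cap W = X$. Because $a_2\cdot\bar z>0$ forces $\bar z\neq 0$ and $a_1\in\mathbb{R}^l_{++}$, we have $a_1\cdot\bar z>0$ and therefore $a_1\cdot\tilde x_1 > a_1\cdot y_1 > a_1\cdot x_1^*$, the desired contradiction. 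The main obstacle is exactly this construction: one must convert the excess utility that the blocking coalition delivers to its type-$2$ members into genuine extra resources for the atom, and the scalar shrinkage $\bar z\mapsto\beta\bar z$ does the job thanks to the nonnegativity of $\bar z$ and the linearity of $u_2$. The degenerate subcase $\mu(S_2)=0$ would be handled separately but is immediate: the coalitional budget forces $y_1=\omega_1$, and then $(\omega_1,\omega_2)\in X$ together with $a_1\cdot\omega_1 > a_1\cdot x_1^*$ already contradicts optimality.
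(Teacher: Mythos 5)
Your proof is correct and follows essentially the same route as the paper: rule out coalitions that miss the atom by averaging the type-two bundles, then convert a block containing $A_1$ into a point of $X$ by averaging the blocking bundles of the type-two members with $\omega_2$ (your $\bar z$ is the paper's $\bar y$, with the correct normalization $\mu(T\setminus A_1)$ where the paper writes $\mu(T\setminus S)$) and rescaling to restore the indifference constraint, contradicting the maximality of $(x_1^*,x_2^*)$. The only differences are cosmetic refinements: you verify that the resource constraint binds (the paper asserts feasibility as clear), you compute the scaling factor explicitly rather than via the intermediate value theorem, and you hand the freed resources to the atom, which is harmless but unnecessary since $a_1\cdot y_1 > a_1\cdot x_1^*$ already yields the contradiction.
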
  

\begin{proof}
Clearly $x^*$ is a feasible allocation. Suppose by way of contradiction that $x^*$ does not belong to $C(\mathscr{E})$. Then there exists a coalition $S$ which blocks $x^*$ with an allocation $y$. Note that $A_1$ belongs to $S$. For if not, then by the linearity of $u_2$ we have that $(1/ \mu(S)) \int_S y$ is preferred to $x_2^*$ by each consumer of type two. However $(1 / \mu(S)) \int_S y = \omega_2$ contradicting $u_2(x_2^*) = u_2(\omega_2)$. By the same argument a non-null set of consumers of type two belongs to $S$. Define

\[\bar y = \frac{1}{\mu(T \setminus S)}\left[ \int_{S \setminus A_1} y + \int_{T \setminus S} \omega_2\right].\]
Clearly $\bar y$ is preferred to $\omega_2$ by each consumer of type two. Since $u_2$ is continuous, by the Intermediate Value Theorem there exists an $\alpha \in (0, 1)$ such that $\alpha \bar y$ is indifferent to $\omega_2$ for each consumer of type two. Finally,

\begin{eqnarray*}
\mu(A_1) y(A_1) + \mu(T \setminus A_1) \alpha \bar y &\leq& \mu(A_1) y(A_1) + \mu(T \setminus A_1) \bar y\\
&=& \mu(A_1) y(A_1) + \int_{S \setminus A_1} y + \int_{T \setminus S} \omega_2\\
&=& \int \omega\\
&=& \mu(A_1) \omega_1 + \mu(T \setminus A_1) \omega_2. 
\end{eqnarray*}

Thus $(y(A_1), \alpha \bar y)$ belongs to $X$ contradicting that $(x^*_1, x^*_2)$ is maximal.
\end{proof}

\paragraph{Proof of the main theorem}

In any two-type unbalanced non-trivial linear economy Lemma \ref{core_allocation} characterises a core allocation. It turns out via Lemma \ref{no_decentralization} that this allocation is not competitive. Thus in every two-type unbalanced non-trivial linear economy there exists a non-competitive core allocation.

\begin{proof}[Proof of Theorem \ref{non_competitive_core_allocation}]
By Lemma \ref{rescaling_lemma} there is no loss of generality in assuming that $\mu(A_1) = \mu(T \setminus A_1) = 1$. Since $\mathscr{E}$ is a two type unbalanced economy Lemma \ref{core_allocation} yields the core allocation $x^*$. Suppose by contradition that $x^*$ is a competitive allocation with price system $p$. We claim that $p = a_2$. For if $p \neq a_2$, then since $\omega_2$ is not Pareto optimal there exists a commodity bundle $y$ such that $p \cdot y \leq p \cdot \omega_2$ and $a_2 \cdot y > a_2 \cdot \omega_2$. Since $a_2 \cdot x_2^* = a_2 \cdot \omega_2$ this contradicts our assumption that $x^*$ is a competitive allocation. Thus $p = a_2$. By Lemma \ref{no_decentralization}, interchanging $a_1$ and $a_2$ if necessary, we have $x_2 \not \leq \omega_1 + \omega_2$. Thus $\int x^* \not \leq \int \omega.$ A contradiction.
\end{proof}

Finally,

\begin{proof}[Proof of Theorem \ref{atomless_measure_space}]
By Theorem \ref{non_competitive_core_allocation} if $(T, \mathscr{T}, \mu)$ is atomic, then $C(\mathscr{E}) \not\subseteq W(\mathscr{E})$ for some such two-type non-trivial linear economy.
\end{proof}

\newpage


\end{document}